\newtheorem{thm}{Theorem}[section]
 \newtheorem{cor}[thm]{Corollary}
 \newtheorem{prop}[thm]{Proposition}
 \theoremstyle{definition}
 \newtheorem{ddefn}[thm]{Definition}
 \theoremstyle{definition}
 \theoremstyle{remark}
\begin{document}
\title[On pseudo-Hermitian operators with generalized $\mathcal{C}$-symmetries]{On pseudo-Hermitian operators with \\ generalized $\mathcal{C}$-symmetries}
\author{S. Kuzhel}
\address{Institute of Mathematics \\
National Academy of Sciences of Ukraine \\
Tereshchenkivs'ka 3 \\
01601 Kiev  \\
Ukraine}
\email{kuzhel@imath.kiev.ua}

\begin{abstract} The concept of $\mathcal{C}$-symmetries for
pseudo-Hermitian Hamiltonians is studied in the Krein space
framework. A generalization of $\mathcal{C}$-symmetries is
suggested.
\end{abstract}
\subjclass{Primary 47A55; Secondary 81Q05, 81Q15} \keywords{Krein
spaces, $J$-self-adjoint operators, $\mathcal{PT}$-symmetric quantum
mechanics, $\mathcal{C}$-symmetries,  pseudo-Hermitian
Hamiltonians.}
 \maketitle

\section{Introduction}
The employing of non-Hermitian operators for the description of
experimentally observable data goes back to the early days of
quantum mechanics \cite{Di, Gu, Pa}. A steady interest to
non-Hermitian Hamiltonians became enormous after it has been
discovered that complex Hamiltonians possessing so-called
$\mathcal{PT}$-symmetry (the product of parity and time reversal)
can have a real spectrum (like self-adjoint operators) \cite{B1, D2,
Z3, Z2}. The results obtained gave rise to a consistent complex
extension of the standard quantum mechanics (see the review paper
\cite{B4} and the references therein).

One of key moments in the $\mathcal{PT}$-symmetric quantum theory is
the description of a previously unnoticed symmetry (hidden symmetry)
for a given ${\mathcal P}{\mathcal T}$-symmetric Hamiltonian $A$
that is represented by a linear operator $\mathcal{C}$. The
properties of $\mathcal{C}$ are nearly identical to those of the
charge conjugation operator in quantum field theory and the
existence of $\mathcal{C}$ provides an inner product whose
associated norm is positive definite and the dynamics generated by
$A$ is then governed by a unitary time evolution. However, the
operator $\mathcal{C}$ depends on the choice of $A$ and its finding
is a nontrivial problem \cite{B, B22, BJ}.

The concept of ${\mathcal P}{\mathcal T}$-symmetry can be placed in
a more general mathematical context known as
\emph{pseudo-Hermiticity}. A linear densely defined operator $A$
acting in a Hilbert space ${\mathfrak H}$  is pseudo-Hermitian if
there is an invertible bounded self-adjoint operator $\eta :
{\mathfrak H} \to {\mathfrak H}$ such that
\begin{equation}\label{e1}
{A}^*\eta=\eta{A},
\end{equation}
where the sign $*$ stands for the adjoint of the corresponding
operator. Including the concept of ${\mathcal P}{\mathcal
T}$-symmetry in a pseudo-Hermitian framework enables one to make
more clear basic constructions of ${\mathcal P}{\mathcal
T}$-symmetric quantum mechanics and to achieve a lot of nontrivial
physical results \cite{AX, BM, GF, JO, M1, M3}.

The related notion of quasi-Hermiticity and its physical
implications were discussed in detail in \cite{GHS, SGH}.

Using Langer's observation \cite{L1} that a Hilbert space
$\mathfrak{H}$ with the indefinite metric $[f,g]_{\eta}=(\eta{f},g)$
($0\in\rho(\eta)$) is a Krein space, one can reduce the
investigation of pseudo-Hermitian operators to the study of
self-adjoint operators in a Krein space \cite{A3, AK2, LT, MO4, TT}.
The present paper continues such trend of investigations and its aim
is to analyze pseudo-Hermitian operators with
$\mathcal{C}$-symmetries in the Krein space setting. The special
attention will be paid to the generalization of the concept of
$\mathcal{C}$ operators.

The existence of a $\mathcal{C}$-symmetry for a pseudo-Hermitian
operator $A$ means that $A$ has a maximal dual pair
$\{\mathfrak{L}_+, \mathfrak{L}_-\}$ of invariant subspaces of
$\mathfrak{H}$ \cite{HL, HL1}.

The paper is organized as follows. Section 2 contains all necessary
Krein space results in the form convenient for our presentation.
Their proofs and detailed analysis can be found in \cite{AZ}.
Section 3 deals with the study of $\mathcal{C}$-symmetries by the
Krein space methods. A more physical presentation of the subject can
be found in \cite{B} -- \cite{BT, JO, MO4, TT}. Section 4 contains
some generalization of the concept of $\mathcal{C}$-symmetry where
operators $\mathcal{C}$ are supposed to be unbounded in
$\mathfrak{H}$. The case of unbounded metric operator $\eta$ has
been recently studied in \cite{KS}. Examples of
$\mathcal{C}$-symmetries and generalized $\mathcal{C}$-symmetries
are presented in Section 5.

For the sake of simplicity, we restrict ourselves to the case where
a self-adjoint operator $\eta$ is also unitary. For such a
self-adjoint and unitary operator $\eta$ the notation $J$ (i.e.
$\eta\equiv{J}$) will be used. Note that the requirement of
unitarity of $\eta$ is not restrictive because the general case of a
self-adjoint operator $\eta$ is reduced to the case above if,
instead of the original scalar product $(\cdot, \cdot)$ in
$\mathfrak{H}$, ones consider another (equivalent to it) scalar
product $(|\eta|\cdot,\cdot)$, where $|\eta|=\sqrt{\eta^2}$ is the
modulus of $\eta$.

\section{Elements of the Krein's spaces theory}

Let ${\mathfrak H}$ be a Hilbert space with scalar product
$(\cdot,\cdot)$ and let $J$ be a fundamental symmetry in ${\mathfrak
H}$  (i.e., $J=J^*$ and  $J^2=I$). The corresponding orthoprojectors
$P_+=1/2(I+J)$, $P_-=1/2(I-J)$ determine the fundamental
decomposition of ${\mathfrak H}$
\begin{equation}\label{d1}
{\mathfrak H}={\mathfrak H}_+\oplus{\mathfrak H}_-, \qquad
{\mathfrak H}_-=P_{-}{\mathfrak H}, \quad {\mathfrak
H}_+=P_{+}{\mathfrak H}.
\end{equation}

The space ${\mathfrak H}$ with an indefinite scalar product
(indefinite metric)
\begin{equation}\label{e29}
 [x,y]:=(J{x}, y), \qquad \forall{x,y}\in\mathfrak{H}
\end{equation}
is called {\it a Krein space} if $\dim{\mathfrak H}_+=\dim{\mathfrak
H}_-=\infty$.

A (closed) subspace $\mathfrak{L}\subset{\mathfrak H}$ is called
{\it nonnegative, positive, uniformly positive} if, respectively, $
[x,x]\geq0, \quad [x,x]>0, \quad [x,x]\geq\alpha\|x\|^2$ for all
$x\in\mathfrak{L}\setminus\{0\}$. Nonpositive, negative, and
uniformly negative subspaces are introduced similarly. The subspaces
$\mathfrak{H}_+$ and $\mathfrak{H}_-$ in (\ref{d1}) are maximal
uniformly positive and maximal uniformly negative, respectively.

Let $\mathfrak{L}_+$ be a maximal positive subspace (i.e., the
closed set $\mathfrak{L}_+$ does not belong as a subspace to any
positive subspace). Its $J$-orthogonal complement
$$
\mathfrak{L}_-=\mathfrak{L}_+^{[\bot]}=\{x\in\mathfrak{H} \ | \
[x,y]=0, \ \forall{y}\in\mathfrak{L}_+\}
$$ is a maximal negative
subspace of $\mathfrak{H}$ and the direct $J$-orthogonal sum
\begin{equation}\label{d2}
{\mathfrak H}'={\mathfrak L}_+[\dot{+}]{\mathfrak L}_-
\end{equation}
is a dense set in $\mathfrak{H}$. (Here the brackets $[\cdot]$ means
the orthogonality with respect to the indefinite metric.) The linear
set ${\mathfrak H}'$ coincides with $\mathfrak{H}$ if and only if
$\mathfrak{L}_+$ is a maximal uniformly positive subspace. In that
case ${\mathfrak L}_-=\mathfrak{L}_+^{[\bot]}$ is a maximal
uniformly negative subspace.

The subspaces ${\mathfrak L}_+$ and ${\mathfrak L}_-$ in (\ref{d2})
can be decomposed as follows:
$$
{\mathfrak L}_+=(I+K)\mathfrak{H}_+ \qquad {\mathfrak
L}_-=(I+Q)\mathfrak{H}_-,
$$
where $K:\mathfrak{H}_+\to\mathfrak{H}_-$ is a contraction and
($Q=K^*:\mathfrak{H}_-\to\mathfrak{H}_+$) coincides with the adjoint
of $K$.

The self-adjoint operator $T=KP_++K^*P_-$ acting in $\mathfrak{H}$
is called {\it an operator of transition} from the fundamental
decomposition (\ref{d1}) to (\ref{d2}). Obviously,
\begin{equation}\label{d4}
{\mathfrak L}_+=(I+T)\mathfrak{H}_+, \qquad {\mathfrak
L}_-=(I+T)\mathfrak{H}_-.
\end{equation}

The collection of operators of transition admits a simple `external'
description. Namely, a self-adjoint operator $T$ in  $\mathfrak{H}$
is an operator of transition if and only if
\begin{equation}\label{e3}
\|Tx\|<\|x\| \quad (\forall{x\not=0}) \qquad \mbox{and} \qquad
JT=-TJ.
\end{equation}
The important particular case of (\ref{d2}), where $\mathfrak{H}'$
coincides with $\mathfrak{H}$ (i.e., $\mathfrak{H}={\mathfrak
L}_+[\dot{+}]{\mathfrak L}_-$) corresponds to the more strong
condition $\|T\|<1$ in (\ref{e3}).

Let $P_{\mathfrak{L}_\pm} : \mathfrak{H}'\to{\mathcal L}_{\pm}$ be
the projectors onto $\mathfrak{L}_\pm$ with respect to decomposition
(\ref{d2}). Repeating step by step the proof of Proposition 9.1 in
\cite{KK} where the case $\mathfrak{H}={\mathfrak
L}_+[\dot{+}]{\mathfrak L}_-$ has been considered, one gets
\begin{equation}\label{k1}
P_{\mathfrak{L}_-}=(I-T)^{-1}(P_--TP_+),  \qquad
P_{\mathfrak{L}_+}=(I-T)^{-1}(P_+-TP_-),
\end{equation}
where $T$ is the operator of transition from \eqref{d1} to
\eqref{d2}.

\section{The Condition of $\mathcal{C}$-symmetry}

A linear densely defined operator $A$ acting in a Krein space
$({\mathfrak H}, [\cdot, \cdot])$ is called $J$-{\it self-adjoint}
if its adjoint ${A}^*$ satisfies the condition ${A}^*J=J{A}$.
Obviously, $J$-self-adjoint operators are pseudo-Hermitian ones in
the sense of \eqref{e1}.

Since a $J$-self-adjoint operator ${A}$ is self-adjoint with respect
to the indefinite metric \eqref{e29}, one can attempt to develop a
consistent quantum theory for $J$-self-adjoint Hamiltonians with
real spectrum. However, in this case, we encounter the difficulty of
dealing with the indefinite metric $[\cdot,\cdot]$. Since the norm
of states carries a probabilistic
 interpretation in the standard quantum theory, the presence of an
 indefinite metric immediately raises problems of interpretation.
 One of the natural ways to overcome this problem consists in the
 construction of a hidden symmetry of $A$ that is represented
 by the linear operator $\mathcal{C}$. This symmetry operator $\mathcal{C}$
 guarantees that the pseudo-Hermitian Hamiltonian $A$ can be used
 to define a unitary theory of quantum mechanics \cite{B4, BBJ}.

By analogy with \cite{B4} the definition of $\mathcal{C}$ can be
formalized as follows.
\begin{ddefn}\label{dad1}
A $J$-self-adjoint operator ${A}$ has the property of
$\mathcal{C}$-symmetry if there exists a bounded linear operator
$\mathcal{C}$ in $\mathfrak{H}$ such that: \ $(i) \
{\mathcal{C}}^2=I;$ \quad $(ii) \ J\mathcal{C}>0$; \quad $(iii) \
A{\mathcal{C}}={\mathcal{C}}A$.
\end{ddefn}

The next simple statement clarifies the structure of
pseudo-Hermitian operators with $\mathcal{C}$-symmetries.

\begin{prop}\label{t1}
Let $A$ be a $J$-self-adjoint operator. Then $A$ has the property of
${\mathcal C}$-symmetry if and only if $A$ admits the decomposition
\begin{equation}\label{new1}
A=A_+[\dot{+}]A_-, \qquad A_{+}=A\upharpoonright\mathfrak{L}_{+},
\quad A_{-}=A\upharpoonright\mathfrak{L}_{-}
\end{equation}
with respect to a certain choice of $J$-orthogonal decomposition of
$\mathfrak{H}$
\begin{equation}\label{e25}
 \mathfrak{H}=\mathfrak{L}_+[\dot{+}]\mathfrak{L}_-, \quad
 \mathfrak{L}_-=\mathfrak{L}_+^{[\bot]},
\end{equation}
where $\mathfrak{L}_+$ is a maximal uniformly positive subspace of
the Krein space $(\mathfrak{H}, [\cdot,\cdot])$.
\end{prop}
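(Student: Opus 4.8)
The plan is to show that a $\mathcal{C}$-symmetry of $A$ and a $J$-orthogonal splitting \eqref{e25} carry exactly the same information, the bridge being the pair of spectral projections $P_\pm=\frac{1}{2}(I\pm\mathcal{C})$ of $\mathcal{C}$.

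For the necessity direction I would start from a bounded $\mathcal{C}$ obeying $(i)$--$(iii)$ and set $\mathfrak{L}_\pm:=P_\pm\mathfrak{H}=\ker(I\mp\mathcal{C})$. By $(i)$ the operators $P_\pm$ are bounded complementary idempotents, so $\mathfrak{L}_\pm$ are closed and $\mathfrak{H}=\mathfrak{L}_+\dot{+}\mathfrak{L}_-$ (a topological direct sum). Reading $(ii)$ as self-adjointness of $J\mathcal{C}$, i.e. $(J\mathcal{C})^*=J\mathcal{C}$, gives $\mathcal{C}^*J=J\mathcal{C}$, which says exactly that $\mathcal{C}$ is symmetric in the indefinite metric: $[\mathcal{C}x,y]=[x,\mathcal{C}y]$. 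Hence for $x\in\mathfrak{L}_+$, $y\in\mathfrak{L}_-$ one gets $[x,y]=[\mathcal{C}x,y]=[x,\mathcal{C}y]=-[x,y]$, so $\mathfrak{L}_+[\bot]\mathfrak{L}_-$ and, combined with the direct sum, $\mathfrak{L}_-=\mathfrak{L}_+^{[\bot]}$. Since $\mathcal{C}$ is boundedly invertible ($\mathcal{C}^{-1}=\mathcal{C}$), the positive operator $J\mathcal{C}$ is boundedly invertible too, hence $J\mathcal{C}\ge\alpha I$ for some $\alpha>0$; therefore $[x,x]=(J\mathcal{C}x,x)\ge\alpha\|x\|^2$ on $\mathfrak{L}_+$ (and $[y,y]\le-\alpha\|y\|^2$ on $\mathfrak{L}_-$), so $\mathfrak{L}_+$ is uniformly positive and, because $\mathfrak{H}=\mathfrak{L}_+\dot{+}\mathfrak{L}_+^{[\bot]}$ with $\mathfrak{L}_+^{[\bot]}$ negative, maximal uniformly positive (any larger positive subspace would contain a nonzero vector of $\mathfrak{L}_-$). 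The invariance of $A$ I would extract from $(iii)$: for bounded $\mathcal{C}$ and (possibly unbounded) $A$ this means $\mathcal{C}\mathcal{D}(A)\subseteq\mathcal{D}(A)$ together with $A\mathcal{C}=\mathcal{C}A$ on $\mathcal{D}(A)$, and $\mathcal{C}^2=I$ upgrades it to $\mathcal{C}\mathcal{D}(A)=\mathcal{D}(A)$; then $P_\pm\mathcal{D}(A)\subseteq\mathcal{D}(A)$, the domain splits as $\mathcal{D}(A)=(\mathcal{D}(A)\cap\mathfrak{L}_+)\dot{+}(\mathcal{D}(A)\cap\mathfrak{L}_-)$, and $AP_\pm f=P_\pm Af\in\mathfrak{L}_\pm$ shows $A$ leaves each $\mathfrak{L}_\pm$ invariant, which is precisely \eqref{new1}.

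For the sufficiency direction I would take the given \eqref{e25} with $\mathfrak{L}_+$ maximal uniformly positive, recall from Section~2 that then the projectors $P_{\mathfrak{L}_\pm}$ of the decomposition are everywhere defined and bounded (formula \eqref{k1} with $\|T\|<1$), and put $\mathcal{C}:=P_{\mathfrak{L}_+}-P_{\mathfrak{L}_-}$; this $\mathcal{C}$ is bounded and $\mathcal{C}^2=I$, so $(i)$ holds. Because $\mathfrak{L}_-=\mathfrak{L}_+^{[\bot]}$, each $P_{\mathfrak{L}_\pm}$ is self-adjoint in $[\cdot,\cdot]$, hence so is $\mathcal{C}$, i.e. $(J\mathcal{C})^*=J\mathcal{C}$; to obtain $(ii)$ I would examine the form $(x,y)_\mathcal{C}:=[x,\mathcal{C}y]=(J\mathcal{C}x,y)$, which is Hermitian (from the $[\cdot,\cdot]$-self-adjointness of $\mathcal{C}$), bounded (as $\mathcal{C}$ is bounded), and, after splitting $x=x_++x_-$ along \eqref{e25} and using $J$-orthogonality and uniform definiteness, $(x,x)_\mathcal{C}=[x_+,x_+]-[x_-,x_-]\ge\alpha(\|x_+\|^2+\|x_-\|^2)\ge\frac{\alpha}{2}\|x\|^2$; hence $J\mathcal{C}$ is a bounded, boundedly invertible, positive self-adjoint operator, that is $J\mathcal{C}>0$. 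Condition $(iii)$ is then immediate from \eqref{new1}: the domain of $A$ splits along $\mathfrak{L}_\pm$ and $A$ preserves each summand, so $\mathcal{C}$ preserves $\mathcal{D}(A)$ and $A\mathcal{C}=\mathcal{C}A$ there.

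The hard part will be bookkeeping at two places. First, the domain of the possibly unbounded $A$ must be handled carefully so that condition $(iii)$ is genuinely equivalent to the operator decomposition \eqref{new1}; this is routine once the correct meaning of $A\mathcal{C}=\mathcal{C}A$ is fixed. Second, and more substantively, the positivity $J\mathcal{C}>0$ in the sufficiency part really uses that \eqref{e25} is a \emph{topological} direct sum, equivalently that $\mathfrak{L}_+$ is \emph{uniformly} positive ($\|T\|<1$): otherwise the projectors $P_{\mathfrak{L}_\pm}$, and hence $\mathcal{C}$, are unbounded and $(ii)$ fails. Thus the essential content of the proposition is the dictionary ``bounded $\mathcal{C}$ with $J\mathcal{C}\ge\alpha I$'' $\longleftrightarrow$ ``maximal uniformly positive $\mathfrak{L}_+$''.
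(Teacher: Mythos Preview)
Your proof is correct and follows the same overall strategy as the paper: in both directions the bridge is $\mathcal{C}=P_{\mathfrak{L}_+}-P_{\mathfrak{L}_-}$, equivalently $\mathfrak{L}_\pm=(I\pm\mathcal{C})\mathfrak{H}$. The differences are only in how two key properties are checked. For the implication ``decomposition $\Rightarrow$ $\mathcal{C}$-symmetry'', the paper establishes $J\mathcal{C}>0$ via the transition-operator formula \eqref{ee12}, namely $J\mathcal{C}=(I-T)(I+T)^{-1}$ with $\|T\|<1$, whereas you compute the quadratic form $(J\mathcal{C}x,x)=[x_+,x_+]-[x_-,x_-]$ directly and invoke uniform definiteness; your route is more elementary, but the paper's formula \eqref{ee12} is set up here precisely because it is reused in Section~4 for the generalized (unbounded) case. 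For the implication ``$\mathcal{C}$-symmetry $\Rightarrow$ decomposition'', the paper shows only that $\mathfrak{L}_+$ is positive and $\mathfrak{L}_-$ negative and then appeals to \cite{AZ} to upgrade the $J$-orthogonal splitting $\mathfrak{H}=\mathfrak{L}_+[\dot{+}]\mathfrak{L}_-$ to maximal \emph{uniform} positivity, while you obtain uniform positivity directly from $J\mathcal{C}\ge\alpha I$ (using that $\mathcal{C}$, hence $J\mathcal{C}$, is boundedly invertible) and deduce maximality from the direct-sum structure; this makes your argument self-contained and avoids the external reference.
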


\begin{proof}
Let $A$ admit the decomposition \eqref{new1} with respect to
\eqref{e25}. Denote
$\mathcal{C}=P_{\mathfrak{L}_+}-P_{\mathfrak{L}_-}$, where
$P_{\mathfrak{L}_\pm}$ are projectors onto $\mathfrak{L}_\pm$
according to \eqref{e25}. Obviously, the bounded linear operator
$\mathcal{C}$ satisfies $\mathcal{C}^2=I$ and
$\mathcal{C}A=A\mathcal{C}$. Furthermore, by virtue of the second
relation in \eqref{e3} and \eqref{k1},
\begin{equation}\label{ee12}
\mathcal{C}=P_{\mathfrak{L}_+}-P_{\mathfrak{L}_-}=(I-T)^{-1}(I+T)(P_+-P_-)=J(I-T)(I+T)^{-1},
\end{equation}
where $T$ is the operator of transition from \eqref{d1} to
\eqref{e25}. Hence $J\mathcal{C}=(I-T)(I+T)^{-1}>0$  (since
$\|T\|<1$). Thus $A$ has $\mathcal{C}$-symmetry.

Conversely, assume that $A$ has $\mathcal{C}$-symmetry and denote
${\mathfrak L}_{+}=(I+\mathcal{C})\mathfrak{H}$ and ${\mathfrak
L}_{-}=(I-\mathcal{C})\mathfrak{H}$. Since $\mathcal{C}^2=I$, one
gets $\mathfrak{H}={\mathfrak L}_+\dot{+}{\mathfrak L}_-$ and
\begin{equation}\label{new24}
\mathcal{C}f_{-}=-{f}_{-}, \quad \mathcal{C}f_{+}={f}_{+}, \qquad
\forall{f_{\pm}}\in{\mathfrak L}_{\pm}.
\end{equation}
Therefore,
$$
[f_+,f_+]=[\mathcal{C}f_+,f_+]=(J\mathcal{C}f_+,f_+)>0, \quad
[f_-,f_-]=-[\mathcal{C}f_-,f_-]=-(J\mathcal{C}f_-,f_-)<0.
$$
Thus ${\mathfrak L}_{+}$ (${\mathfrak L}_{-}$) is a positive
(negative) linear set of $\mathfrak{H}$.

The property of $J\mathcal{C}$ to be self-adjoint in $\mathfrak{H}$
implies that $\mathcal{C}^*J=J\mathcal{C}$, i.e., $\mathcal{C}$ is
J-self-adjoint. In that case
$$
[f_+,f_-]=[\mathcal{C}f_+,f_-]=[f_+,\mathcal{C}f_-]=-[f_+,f_-]
$$
and hence, $[f_+,f_-]=0$.

Summing the results established above, one concludes that the
operator $\mathcal{C}$ determines an $J$-orthogonal decomposition
\eqref{e25} of $\mathfrak{H}$, where ${\mathfrak L}_{+}$ and
${\mathfrak L}_{-}$ are positive and negative linear subspaces of
$\mathfrak{H}$. Such type of decomposition is possible only in the
case where $\mathfrak{L}_+$ is a maximal uniformly  positive
subspace of $\mathfrak{H}$ and
$\mathfrak{L}_-=\mathfrak{L}_+^{[\bot]}$ \cite{AZ}.

To complete the proof it suffices to observe that \eqref{new1}
follows from the relations $A\mathcal{C}=\mathcal{C}A$ and
${\mathfrak L}_{\pm}=(I\pm\mathcal{C})\mathfrak{H}$.
\end{proof}

\begin{cor}\label{c1}
A $J$-self-adjoint operator $A$ has the property of ${\mathcal
C}$-symmetry if and only if
$H=\sqrt{J{\mathcal{C}}}A(\sqrt{J{\mathcal{C}}})^{-1}$ is a
self-adjoint operator in $\mathfrak{H}$.
\end{cor}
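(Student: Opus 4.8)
The plan is to reduce the corollary to a single operator identity and then clear away the domain questions caused by the possible unboundedness of $A$. Write $\rho:=\sqrt{J\mathcal{C}}$ (a local abbreviation). Since $\mathcal{C}^2=I$ and $J^2=I$, the operators $\mathcal{C}$, $J$, and hence $J\mathcal{C}$ are boundedly invertible; together with $J\mathcal{C}>0$ this forces the bounded self-adjoint operator $J\mathcal{C}$ to have spectrum in some interval $[m,M]\subset(0,\infty)$ — equivalently one may read this off from $J\mathcal{C}=(I-T)(I+T)^{-1}$, $\|T\|<1$, obtained in the proof of Proposition~\ref{t1}. Thus $\rho$ is bounded, positive, self-adjoint and boundedly invertible, with $\rho^2=J\mathcal{C}$, and $H=\rho A\rho^{-1}$ is a densely defined operator with $D(H)=\rho\,D(A)$. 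I would also record two domain identities: $A^*=JAJ$ (from $J$-self-adjointness) gives $D(A^*)=J\,D(A)$, and, whenever $A\mathcal{C}=\mathcal{C}A$, the relation $\mathcal{C}^2=I$ gives $\mathcal{C}\,D(A)=D(A)$, whence $\rho^2\,D(A)=J\mathcal{C}\,D(A)=J\,D(A)=D(A^*)$.

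The algebraic heart of the statement is the equivalence, valid because $A^*J=JA$,
$$
A^*(J\mathcal{C})=(J\mathcal{C})A \quad\Longleftrightarrow\quad A\mathcal{C}=\mathcal{C}A,
$$
each implication following by multiplying the other side on the left by $J$ and using $A^*J=JA$; the only subtlety is that the domains of the two sides match, which is precisely the identity $\rho^2 D(A)=D(A^*)$ above. Since $\rho^{\pm1}$ are bounded and boundedly invertible, a routine adjoint computation yields $H^*=\rho^{-1}A^*\rho$ with $D(H^*)=\{g:\rho g\in D(A^*)\}$. Now if $A$ has $\mathcal{C}$-symmetry, then (iii) and the equivalence give $A^*(J\mathcal{C})=(J\mathcal{C})A$, the domain computation gives $D(H^*)=\rho^{-1}D(A^*)=\rho\,D(A)=D(H)$, and for $g=\rho f$ with $f\in D(A)$ one has $H^*g=\rho^{-1}A^*\rho^2 f=\rho^{-1}(J\mathcal{C})Af=\rho Af=Hg$, so $H=H^*$. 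Conversely, $H=H^*$ forces $D(A^*)=\rho^2 D(A)=J\mathcal{C}\,D(A)$; combined with $D(A^*)=J\,D(A)$ this gives $\mathcal{C}\,D(A)=D(A)$, while equality of $H$ and $H^*$ on $g=\rho f$ gives $\rho^2 Af=A^*\rho^2 f$, i.e.\ $(J\mathcal{C})A=A^*(J\mathcal{C})$ on $D(A)$; the equivalence then yields $A\mathcal{C}=\mathcal{C}A$, which is condition~(iii), so $A$ has $\mathcal{C}$-symmetry.

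I expect the only real obstacle to be this domain bookkeeping — making sure the cancellations $\rho\rho^{-1}=I$ and the rearrangements of the unbounded products $A^*(J\mathcal{C})$ and $(J\mathcal{C})A$ are legitimate — which is settled once and for all by $D(A^*)=J\,D(A)$ and $\mathcal{C}\,D(A)=D(A)$; the algebra itself is trivial. If one prefers to avoid explicit domain chasing, the same content can be packaged invariantly: $(f,g)_{\mathcal{C}}:=(J\mathcal{C}f,g)$ is a scalar product on $\mathfrak{H}$ equivalent to $(\cdot,\cdot)$, the map $\rho$ is unitary from $(\mathfrak{H},(\cdot,\cdot)_{\mathcal{C}})$ onto $(\mathfrak{H},(\cdot,\cdot))$, so $A$ is self-adjoint with respect to $(\cdot,\cdot)_{\mathcal{C}}$ if and only if $H$ is self-adjoint in $\mathfrak{H}$; and the former property is equivalent, via $A^*J=JA$, to $A\mathcal{C}=\mathcal{C}A$.
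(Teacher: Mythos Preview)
Your proof is correct and follows essentially the same idea as the paper's: both hinge on the intertwining relation $A^*(J\mathcal{C})=(J\mathcal{C})A\iff A\mathcal{C}=\mathcal{C}A$ (equivalently, symmetry of $A$ for the scalar product $(\cdot,\cdot)_{\mathcal{C}}=(J\mathcal{C}\,\cdot,\cdot)$) and on the similarity $H=\sqrt{J\mathcal{C}}\,A\,(\sqrt{J\mathcal{C}})^{-1}$. The paper simply writes the chain $(\sqrt{F}Ax,\sqrt{F}y)=[\mathcal{C}Ax,y]=\cdots=(\sqrt{F}x,\sqrt{F}Ay)$ and declares $H$ self-adjoint; you perform the same computation at the operator level, verifying $H^*=\rho^{-1}A^*\rho$ and matching domains via $D(A^*)=J\,D(A)$ and $\mathcal{C}\,D(A)=D(A)$. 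That extra bookkeeping is a genuine improvement: the paper's argument, taken literally, only yields symmetry of $H$ and tacitly writes $A\mathcal{C}x$ before $\mathcal{C}x\in D(A)$ has been established, whereas your domain identities close both gaps cleanly. Your final paragraph, recasting the argument through the equivalent scalar product $(\cdot,\cdot)_{\mathcal{C}}$, is exactly the paper's presentation.
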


\begin{proof}
Denote for brevity $F=J{\mathcal{C}}$. It follows from the
conditions $(i), (ii)$ of the definition of $\mathcal{C}$-symmetry
that $F$ is a bounded uniformly positive operator in $\mathfrak{H}$.

If a $J$-self-adjoint operator $A$ has the property of ${\mathcal
C}$-symmetry, then
$$
(\sqrt{F}Ax,\sqrt{F}y)=[\mathcal{C}Ax, y]=[A\mathcal{C}x,
y]=[\mathcal{C}x, Ay]=(\sqrt{F}x,\sqrt{F}Ay).
$$
This means that $H=\sqrt{F}A(\sqrt{F})^{-1}$ is a self-adjoint
operator in $\mathfrak{H}$ with respect to the initial product
$(\cdot,\cdot)$.

Conversely, if $H=\sqrt{F}A(\sqrt{F})^{-1}$ is self-adjoint, then
$$
[\mathcal{C}Ax, y]=(H\sqrt{F}x, \sqrt{F}y)=(\sqrt{F}x,
H\sqrt{F}y)=[\mathcal{C}x, Ay]=[A\mathcal{C}x, y]
$$
for any $x,y\in\mathfrak{H}$. Therefore, $\mathcal{C}A=A\mathcal{C}$
and $A$ has ${\mathcal C}$-symmetry.
\end{proof}

It follows from the proof that the scalar product
$(x,y)_{\mathcal{C}}:=[{\mathcal{C}}x,y]$ determined by
$\mathcal{C}$ is equivalent to the initial scalar product
$(\cdot,\cdot)$ in $\mathfrak{H}$. Thus the existence of a
${\mathcal{C}}$-symmetry for a $J$-self-adjoint operator $A$ ensures
unitarity of the dynamics generated by $A$ in the norm
$\|\cdot\|^2_{\mathcal{C}}=(\cdot,\cdot)_{\mathcal{C}}$ equivalent
to the initial one.

In contrast to Proposition \ref{t1}, Corollary \ref{c1} does not
emphasize the property of $A$ to be diagonalizable into two operator
parts in $\mathfrak{H}$.

Denote
\begin{equation}\label{new64}
U=\frac{1}{2}[(I+\mathcal{C})P_++(I-\mathcal{C})P_-].
\end{equation}

It is clear that $\mathcal{C}U=UJ$. This gives $U :
\mathfrak{H}_+\to{\mathfrak{L}_{+}}$ and $U :
\mathfrak{H}_-\to{\mathfrak{L}_{-}}$, where $\mathfrak{H}_\pm$ are
subspaces of the fundamental decomposition \eqref{d1} and
$\mathfrak{L}_\pm=(I\pm{C})\mathfrak{H}$ are reducing subspaces for
$A$ in Proposition \ref{t1}. Furthermore, it follows from
\eqref{new64} that the operators
$U_{\pm}=U\upharpoonright\mathfrak{H}_{\pm}$ determine bounded
invertible mappings of $\mathfrak{H}_\pm$ onto $\mathfrak{L}_\pm$
(since the subspaces $\mathfrak{H}_+$ and $\mathfrak{L}_{+}$ are
maximal uniformly positive and $\mathfrak{H}_-$ and
$\mathfrak{L}_{-}$ are maximal uniformly negative).

By virtue of Proposition \ref{t1}, the transformation $U$ decomposes
an $J$-self-adjoint operator $A$ with $\mathcal{C}$ symmetry into
the $2\times{2}$-block form
$$
U^{-1}AU=\left(\begin{array}{cc}
U_{+}^{-1}AU_{+} & 0 \\
0 & U_{-}^{-1}AU_{-}
\end{array}\right)
$$
with respect to the fundamental decomposition \eqref{d1}. For this
reason the mapping determined by $U$ can be considered as a
generalization of the Foldy-Wouthuysen transformation well-known in
quantum mechanics (see e.g. \cite{Ta}).

\section{Generalized ${\mathcal C}$-Symmetry}

The concept of ${\mathcal{C}}$-symmetry can be weakened as follows.
\begin{ddefn}\label{dad2}
A $J$-self-adjoint operator ${A}$ has the property of generalized
$\mathcal{C}$-symmetry if there exists a linear densely defined
operator $\mathcal{C}$ in $\mathfrak{H}$ such that: \ $(i) \
{\mathcal{C}}^2=I;$ \quad $(ii)$ \ the operator $J\mathcal{C}$ is
positive self-adjoint in $\mathfrak{H}$; \quad $(iii) \
\mathcal{D}(A)\subset\mathcal{D}(\mathcal{C}) \quad \mbox{and} \quad
A{\mathcal{C}}={\mathcal{C}}A$.
\end{ddefn}

 The main difference with definition \ref{dad1} is that the
operator $\mathcal{C}$ is not assumed to be bounded.

\begin{prop}[cf. Proposition \ref{t1}]\label{tt66}
 A $J$-self-adjoint operator $A$ has the property of generalized
$\mathcal{C}$-symmetry if and only if $A$ admits the decomposition
\begin{equation}\label{newww1}
A=A_+[\dot{+}]A_-,  \qquad A_{+}=A\upharpoonright\mathfrak{L}_{+},
\quad A_{-}=A\upharpoonright\mathfrak{L}_{-}
\end{equation}
with respect to a certain choice of $J$-orthogonal sum
\begin{equation}\label{new71}
\mathfrak{H}\supset\mathfrak{H}'=\mathfrak{L}_+[\dot{+}]\mathfrak{L}_-
\qquad \mathfrak{L}_-=\mathfrak{L}_+^{[\bot]},
\end{equation}
where $\mathfrak{L}_+$ is a maximal positive subspace of the Krein
space $(\mathfrak{H}, [\cdot,\cdot])$.
\end{prop}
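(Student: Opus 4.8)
The plan is to imitate the proof of Proposition~\ref{t1} step by step, replacing the bounded operator $\mathcal{C}$ there by a possibly unbounded one and the \emph{uniformly} positive subspace $\mathfrak{L}_+$ by a merely positive one. The one new device I expect to need is the Cayley-type transform
\[
F=J\mathcal{C}\ \longleftrightarrow\ T=(I-F)(I+F)^{-1}=-I+2(I+F)^{-1},
\]
which should identify positive self-adjoint operators $F$ having trivial kernel with operators of transition $T$ in the sense of \eqref{e3}; all the remaining manipulations are the formal ones already performed in Section~3.

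For the sufficiency I would start from a decomposition $A=A_{+}[\dot{+}]A_{-}$ relative to \eqref{new71}, take the operator of transition $T$ from \eqref{d1} to \eqref{new71} (so $\mathfrak{L}_{\pm}=(I+T)\mathfrak{H}_{\pm}$ by \eqref{d4}, $\mathfrak{H}'=(I+T)\mathfrak{H}$, and $P_{\mathfrak{L}_{\pm}}$ are given by \eqref{k1}), and set $\mathcal{C}=P_{\mathfrak{L}_{+}}-P_{\mathfrak{L}_{-}}$ with $\mathcal{D}(\mathcal{C})=\mathfrak{H}'$. Then $\mathcal{C}^{2}=I$ on $\mathfrak{H}'$ since $P_{\mathfrak{L}_{\pm}}$ are complementary projections, and exactly the computation of \eqref{ee12} (using the second relation of \eqref{e3} and \eqref{k1}) gives $\mathcal{C}=J(I-T)(I+T)^{-1}$, hence $J\mathcal{C}=(I-T)(I+T)^{-1}$. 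I would then invoke the functional calculus for the bounded self-adjoint $T$: since $t\mapsto(1-t)/(1+t)$ is nonnegative on $\sigma(T)\subset[-1,1]$ and $\|Tx\|<\|x\|$ from \eqref{e3} excludes the eigenvalues $\pm1$, the operator $J\mathcal{C}$ is positive self-adjoint (with trivial kernel, possibly unbounded), which is $(i)$--$(ii)$ of Definition~\ref{dad2}. Finally $\mathcal{D}(A)=(\mathcal{D}(A)\cap\mathfrak{L}_{+})\dot{+}(\mathcal{D}(A)\cap\mathfrak{L}_{-})\subset\mathfrak{H}'=\mathcal{D}(\mathcal{C})$, and since $A$ leaves $\mathfrak{L}_{\pm}$ invariant, $A\mathcal{C}=\mathcal{C}A$ holds exactly as in Proposition~\ref{t1}.

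For the necessity I would set $F=J\mathcal{C}$ (positive self-adjoint by $(ii)$), note that $\mathcal{C}=JF$ is closed, and deduce from $\mathcal{C}^{2}=I$ that $\mathcal{C}$ — hence $F$ — is injective and that $FJF=J$, whence $JFJ=F^{-1}$. Then $T:=(I-F)(I+F)^{-1}=-I+2(I+F)^{-1}$ is bounded self-adjoint, satisfies $JTJ=-T$ by the relation $JFJ=F^{-1}$, and has no eigenvalue $\pm1$ (an eigenvalue $1$ would make $0$ an eigenvalue of $F$), so $\|Tx\|<\|x\|$ for $x\neq0$; i.e.\ $T$ is an operator of transition. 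Putting $\mathfrak{L}_{\pm}=(I+T)\mathfrak{H}_{\pm}$ and $\mathfrak{H}'=(I+T)\mathfrak{H}$, the Krein space results of Section~2 give the $J$-orthogonal sum \eqref{new71} with $\mathfrak{L}_{+}$ maximal positive. Using $I+T=2(I+F)^{-1}$ one checks $\mathfrak{H}'=\mathrm{Ran}\,(I+F)^{-1}=\mathcal{D}(F)=\mathcal{D}(\mathcal{C})$, and \eqref{k1} gives $P_{\mathfrak{L}_{+}}-P_{\mathfrak{L}_{-}}=J(I-T)(I+T)^{-1}=JF=\mathcal{C}$ on $\mathfrak{H}'$; so $\mathfrak{L}_{\pm}=(I\pm\mathcal{C})\mathcal{D}(\mathcal{C})$ are the $\pm1$-eigenspaces of the closed operator $\mathcal{C}$, in particular closed subspaces. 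From $A\mathcal{C}=\mathcal{C}A$ and $\mathcal{D}(A)\subset\mathcal{D}(\mathcal{C})$ one then argues, just as in Proposition~\ref{t1}, that $A$ sends $\mathcal{D}(A)\cap\mathfrak{L}_{\pm}$ into $\mathfrak{L}_{\pm}$ and that $\mathcal{D}(A)$ splits along $\mathfrak{L}_{+}\dot{+}\mathfrak{L}_{-}$, which is \eqref{newww1}.

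The hard part, and essentially the only place where this differs from Proposition~\ref{t1}, will be the domain bookkeeping around the Cayley transform: checking that $\mathcal{D}(\mathcal{C})=\mathcal{D}(F)$ really equals $\mathrm{Ran}(I+T)=\mathfrak{H}'$, that $JFJ=F^{-1}$ genuinely follows from the unbounded identity $\mathcal{C}^{2}=I$ (which forces one to fix its meaning: $\mathcal{C}$ is an involution of $\mathcal{D}(\mathcal{C})$ onto itself), and that $\mathcal{C}$ is closed so that the eigenmanifolds $\mathfrak{L}_{\pm}$ are closed subspaces rather than mere linear sets. Likewise the relation $A\mathcal{C}=\mathcal{C}A$ for unbounded $\mathcal{C}$ must be read carefully (namely $\mathcal{C}$ maps $\mathcal{D}(A)$ into itself with $A\mathcal{C}x=\mathcal{C}Ax$ there) before the block decomposition can be extracted. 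Once these points are pinned down, the positivity and $J$-orthogonality of $\mathfrak{L}_{\pm}$ transfer verbatim from the proof of Proposition~\ref{t1}, with the strict inequality ``$J\mathcal{C}>0$'' replaced by ``$J\mathcal{C}$ positive self-adjoint with trivial kernel''.
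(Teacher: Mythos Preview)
Your proposal is correct and follows essentially the same route as the paper: both directions hinge on the Cayley-type correspondence $F=J\mathcal{C}\leftrightarrow T=(I-F)(I+F)^{-1}$, together with the projector formula \eqref{k1} to recover $\mathcal{C}=P_{\mathfrak{L}_+}-P_{\mathfrak{L}_-}=J(I-T)(I+T)^{-1}$, exactly as in the paper's proof. You are in fact somewhat more explicit than the paper about the domain bookkeeping (e.g.\ verifying $\mathcal{D}(\mathcal{C})=\mathrm{Ran}(I+T)=\mathfrak{H}'$, the closedness of $\mathcal{C}$, and the precise reading of $A\mathcal{C}=\mathcal{C}A$), but these are elaborations of the same argument rather than a different approach.
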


\begin{proof}
If $\mathfrak{L}_+$ satisfies the condition of Proposition
\ref{tt66},  then ${\mathfrak L}_+[\dot{+}]{\mathfrak L}_-$ is a
dense set in $\mathfrak{H}$ \cite{AZ}. Denote
$\mathcal{C}=P_{\mathfrak{L}_+}-P_{\mathfrak{L}_-}$, where
$P_{\mathfrak{L}_\pm}$ are projectors in ${\mathfrak
L}_+[\dot{+}]{\mathfrak L}_-$ onto $\mathfrak{L}_\pm$. It follows
from the definition of $\mathcal{C}$ and \eqref{newww1} that
$\mathcal{C}^2=I$,  $\mathcal{C}A=A\mathcal{C}$, and
$\mathcal{D}(\mathcal{C})=\mathfrak{L}_+[\dot{+}]\mathfrak{L}_-$.

The operator $\mathcal{C}$ is also defined by \eqref{ee12}, where
$T$ is the operator of transition from \eqref{d1} to
$\mathfrak{L}_+[\dot{+}]\mathfrak{L}_-$. Since $T$ satisfies
\eqref{e3}, the operator $J\mathcal{C}=(I-T)(I+T)^{-1}$ is positive
self-adjoint in $\mathfrak{H}$. Thus $A$ has a generalized
$\mathcal{C}$-symmetry.

Conversely, assume that $A$ has generalized $\mathcal{C}$-symmetry
and denote
\begin{equation}\label{as1}
T=(I-F)(I+F)^{-1}, \qquad F=J\mathcal{C}.
\end{equation}

Since $F$ is positive self-adjoint, the operator $T$ satisfies
$\|Tx\|<\|x\|$ ($x\not=0$) and
\begin{equation}\label{as2}
J\mathcal{C}=F=(I-T)(I+T)^{-1}.
\end{equation}

The conditions $(i)$ and $(ii)$ of definition \ref{dad2} imply
$JF=\mathcal{C}=F^{-1}J$. Combining this relation with \eqref{as1}
one gets $JT=-TJ$. So, the operator $T$ satisfies the conditions of
\eqref{e3}. This means that $T$ is the operator of transition from
\eqref{d1} to the direct $J$-orthogonal sum \eqref{d2} (or
\eqref{new71}), where a maximal positive subspace $\mathfrak{L}_{+}$
has the form \eqref{d4} and
$\mathfrak{L}_-=\mathfrak{L}_+^{[\bot]}$.

Since the projectors $P_{\mathfrak{L}_{\pm}}$ are defined by
\eqref{k1}, relation \eqref{as2} implies (cf. \eqref{ee12}) $
P_{\mathfrak{L}_+}-P_{\mathfrak{L}_-}=J(I-T)(I+T)^{-1}=\mathcal{C}.$
Hence, ${\mathfrak L}_{+}=(I+\mathcal{C})\mathcal{D}(\mathcal{C})$
and ${\mathfrak L}_{-}=(I-\mathcal{C})\mathcal{D}(\mathcal{C})$. In
that case, the decomposition \eqref{newww1} immediately follows from
the relation $A\mathcal{C}=\mathcal{C}A$.
\end{proof}

\begin{cor}\label{pp21}
If a $J$-self-adjoint operator $A$ possesses a generalized
$\mathcal{C}$-symmetry given by an operator $\mathcal{C}$ in the
sense of Definition \eqref{dad2}, then its adjoint $\mathcal{C}^*$
provides the property of a generalized $\mathcal{C}$-symmetry for
$A^*$.
\end{cor}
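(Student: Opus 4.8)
The plan is to reduce the whole statement to conjugation by the bounded unitary involution $J$. The first, and decisive, step is to identify $\mathcal{C}^*$ explicitly. Condition $(ii)$ of Definition \ref{dad2} says that $J\mathcal{C}$ is self-adjoint, so $(J\mathcal{C})^* = J\mathcal{C}$; on the other hand, since $J$ is everywhere defined and bounded one has $(J\mathcal{C})^* = \mathcal{C}^*J^* = \mathcal{C}^*J$. Comparing the two and using $J^2 = I$ gives the identity $\mathcal{C}^* = J\mathcal{C}J$. In the same spirit, the relation $A^*J = JA$ (the $J$-self-adjointness of $A$) rewrites as $A^* = JAJ$; this also shows that $A$ is closed, being a $J$-conjugate of the closed operator $A^*$, so that $A^{**} = A$ and $A^*$ is itself $J$-self-adjoint, and hence Definition \ref{dad2} is applicable to $A^*$.

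With these two identities in hand I would verify the three conditions of Definition \ref{dad2} for the pair $(A^*,\mathcal{C}^*)$ by direct computation. For $(i)$: $(\mathcal{C}^*)^2 = (J\mathcal{C}J)(J\mathcal{C}J) = J\mathcal{C}^2J = I$. For $(ii)$: $J\mathcal{C}^* = J(J\mathcal{C}J) = \mathcal{C}J = J(J\mathcal{C})J^{-1}$, which is the conjugate by the unitary $J$ of the positive self-adjoint operator $J\mathcal{C}$, hence is positive self-adjoint. For $(iii)$: from $A^* = JAJ$ one gets $\mathcal{D}(A^*) = J\mathcal{D}(A)$, and from $\mathcal{C}^* = J\mathcal{C}J$ one gets $\mathcal{D}(\mathcal{C}^*) = J\mathcal{D}(\mathcal{C})$, so the inclusion $\mathcal{D}(A)\subset\mathcal{D}(\mathcal{C})$ yields $\mathcal{D}(A^*)\subset\mathcal{D}(\mathcal{C}^*)$ upon applying $J$; and $A^*\mathcal{C}^* = (JAJ)(J\mathcal{C}J) = JA\mathcal{C}J = J\mathcal{C}AJ = (J\mathcal{C}J)(JAJ) = \mathcal{C}^*A^*$, the middle step invoking condition $(iii)$ for $(A,\mathcal{C})$.

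The only point that demands genuine care is the bookkeeping of domains, since $\mathcal{C}$ (and therefore $A$, $A^*$, $\mathcal{C}^*$) is unbounded. Two things should be checked: that $(J\mathcal{C})^* = \mathcal{C}^*J$ is an exact identity, which holds precisely because $J$ is bounded and everywhere defined; and that the relation $A^*\mathcal{C}^* = \mathcal{C}^*A^*$ is an equality of operators with the same domain rather than a mere inclusion. The latter follows because $J$ is a bounded bijection: it carries $\mathcal{D}(A\mathcal{C})$ onto $\mathcal{D}(A^*\mathcal{C}^*)$ and $\mathcal{D}(\mathcal{C}A)$ onto $\mathcal{D}(\mathcal{C}^*A^*)$, and by the structure exhibited in Proposition \ref{tt66} the operators $A\mathcal{C}$ and $\mathcal{C}A$ share the common domain $\mathcal{D}(A)$. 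I do not expect any deeper obstacle; one could alternatively deduce the corollary from Proposition \ref{tt66} by showing that $\{J\mathfrak{L}_+,\,J\mathfrak{L}_-\}$ is a $J$-orthogonal dual pair reducing $A^*$, but the direct conjugation argument above is shorter and more transparent.
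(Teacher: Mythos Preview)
Your argument is correct and is in fact more direct than the paper's. The paper proceeds via the geometric characterization of Proposition~\ref{tt66}: from $A^*=JAJ^{-1}$ it observes that $A^*$ is reduced by the $J$-orthogonal sum $J\mathfrak{L}_+[\dot{+}]J\mathfrak{L}_-$, and then identifies the corresponding symmetry operator by computing the transition operator for this new pair as $T'=-T$, which yields $\mathcal{C}'=J(I+T)(I-T)^{-1}=(I-T)(I+T)^{-1}J=\mathcal{C}^*$. Your route bypasses the transition-operator machinery entirely: the single identity $\mathcal{C}^*=J\mathcal{C}J$ (immediate from self-adjointness of $J\mathcal{C}$) reduces all three conditions of Definition~\ref{dad2} to a unitary conjugation, and the domain bookkeeping you supply is exactly what is needed. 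What the paper's approach buys is that it makes explicit the invariant dual pair $\{J\mathfrak{L}_+,J\mathfrak{L}_-\}$ for $A^*$, a fact used in the very next corollary; what your approach buys is economy and independence from the formula~\eqref{ee12}. Amusingly, the alternative you sketch in your last sentence is precisely the paper's proof.
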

\begin{proof}
Since $A$ is $J$-self-adjoint, the relation $AJ=JA^{*}$ holds.
Therefore, if $A$ is decomposed with respect to \eqref{new71} in the
sense of Proposition \ref{tt66}, then $A^*$ has the similar
decomposition with respect to the dense subset $J{\mathfrak
L}_+[\dot{+}]J{\mathfrak L}_-$ of $\mathfrak{H}$, where $J{\mathfrak
L}_+$ is a maximal positive subspace of the Krein space
$(\mathfrak{H}, [\cdot,\cdot])$. Therefore, $A^*$ has a generalized
$\mathcal{C}$-symmetry.

The $J$-orthogonal sum \eqref{new71} is uniquely determined by the
operator $\mathcal{C}=J(I-T)(I+T)^{-1}$, where $T$ is the operator
of transition from \eqref{d1} to
$\mathfrak{L}_+[\dot{+}]\mathfrak{L}_-$. It follows from \eqref{d4}
and \eqref{e3} that $T'=-T$ is the operator of transition from
\eqref{d1} to $J\mathfrak{L}_+[\dot{+}]J\mathfrak{L}_-$. According
to the proof of Proposition \ref{tt66} and \eqref{e3}, the operator
$$
\mathcal{C}'=J(I-T')(I+T')^{-1}=J(I+T)(I-T)^{-1}=(I-T)(I+T)^{-1}J=\mathcal{C}^*
$$
provides the property of generalized $\mathcal{C}$-symmetry for
$A^*$.
\end{proof}

\begin{cor}\label{p21}
If a $J$-self-adjoint operator $A$ has a generalized
$\mathcal{C}$-symmetry, then $\mathbb{C}\setminus\mathbb{R}$ belongs
to the continuous spectrum of $A$ (i.e.,
$\sigma_c(A)\supset\mathbb{C}\setminus\mathbb{R}$).
\end{cor}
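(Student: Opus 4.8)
The plan is to exploit the decomposition of $A$ provided by Proposition~\ref{tt66} together with the fact that $\mathfrak{L}_+$ is merely a \emph{maximal positive} (and not maximal uniformly positive) subspace, so that the direct sum $\mathfrak{L}_+[\dot{+}]\mathfrak{L}_-$ is a \emph{proper} dense subset $\mathfrak{H}'\subsetneq\mathfrak{H}$. First I would record that, by Proposition~\ref{tt66}, $A=A_+[\dot{+}]A_-$ with $A_\pm=A\upharpoonright\mathfrak{L}_\pm$, and that the associated operator of transition $T$ satisfies $\|Tx\|<\|x\|$ for $x\neq0$ but $\|T\|=1$ (otherwise $\mathfrak{H}'=\mathfrak{H}$ and we would be back in the situation of Proposition~\ref{t1}); equivalently $1\in\sigma(|T|)$ but $1$ is not an eigenvalue. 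Correspondingly $F=J\mathcal{C}=(I-T)(I+T)^{-1}$ is an \emph{unbounded} positive self-adjoint operator with $0\in\sigma(F)$, $0$ not an eigenvalue, and hence the similarity $H=\sqrt{F}\,A\,(\sqrt{F})^{-1}$ of Corollary~\ref{c1} is no longer a bounded-boundedly-invertible transformation.

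Next I would take an arbitrary $\lambda\in\mathbb{C}\setminus\mathbb{R}$ and analyse $A-\lambda I$. On $\mathfrak{L}_+$ we have $A_+=U_+ (U_+^{-1}A_+U_+) U_+^{-1}$ where $U_+:\mathfrak{H}_+\to\mathfrak{L}_+$ is the (in general unbounded, densely defined) analogue of the map in \eqref{new64}; since $U_+^{-1}A_+U_+$ acting in the Hilbert space $\mathfrak{H}_+$ is symmetric (indeed it is $\sqrt{F_+}A_+\sqrt{F_+}^{-1}$ with $F_+=F\upharpoonright\mathfrak{H}_+$ self-adjoint positive), its spectrum is real, so $A_+-\lambda I$ is injective with dense range but, because $U_+$ is not boundedly invertible, the inverse $(A_+-\lambda I)^{-1}$ — which exists as a densely defined operator on $\mathfrak{L}_+$ — is \emph{unbounded}. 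Thus $\lambda\notin\rho(A_+)$, and $\lambda$ is not an eigenvalue of $A_+$ (no eigenvector in $\mathfrak{L}_+$). The same argument on $\mathfrak{L}_-$ (using $F_-$ and the negative part) shows $\lambda$ is not an eigenvalue of $A_-$. Since $\mathfrak{H}'=\mathfrak{L}_+[\dot{+}]\mathfrak{L}_-$ is dense and reduces $A$, an eigenvector of $A$ for $\lambda$ would have to lie in $\mathcal{D}(\mathcal{C})=\mathfrak{H}'$ and split into components in $\mathfrak{L}_\pm$, each an eigenvector of $A_\pm$; hence $\lambda$ is not an eigenvalue of $A$, i.e. $\ker(A-\lambda I)=\{0\}$. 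A parallel statement holds for $A^*$ by Corollary~\ref{pp21} applied with $\bar\lambda$, or directly since $A^*=JAJ$ conjugated, so $\mathrm{Ran}(A-\lambda I)$ is dense in $\mathfrak{H}$; combined with $(A-\lambda I)^{-1}$ being unbounded this places $\lambda$ in the continuous spectrum $\sigma_c(A)$.

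The main obstacle I anticipate is making rigorous the claim that $(A_\pm-\lambda I)^{-1}$ is genuinely unbounded — i.e. that $A_\pm$ really \emph{fails} to have nonreal points in its resolvent set — rather than merely not obviously bounded. The cleanest route is: if some $\lambda\in\mathbb{C}\setminus\mathbb{R}$ were in $\rho(A_+)$, then since $A_+$ is $J$-symmetric on the positive subspace $\mathfrak{L}_+$ one could argue $\mathfrak{L}_+$ would have to be uniformly positive (a maximal positive subspace invariant under an operator with a nonreal regular point, via the Cayley/resolvent estimate $\|(A_+-\lambda I)^{-1}\|<\infty$ forcing a uniform positivity bound on $\mathfrak{L}_+$ through $F_+$ being bounded), contradicting maximality-but-not-uniform-positivity. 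Equivalently, one invokes that $0\in\sigma(\sqrt{F})$, transfers this through the similarity of Corollary~\ref{c1}, and uses that a bounded similarity preserves the resolvent set while an unbounded one need not; spelling out the approximate-eigenvector sequence $x_n$ with $\|Fx_n\|\to0$ (or $\|F^{-1}x_n\|\to\infty$) and feeding it into $A-\lambda I$ exhibits $\|(A-\lambda I)y_n\|/\|y_n\|\to 0$ along a sequence with no convergent subsequence, which is exactly membership in $\sigma_c(A)$. I would present the argument in that order: (1) recall the decomposition and $\|T\|=1$; (2) no nonreal eigenvalues; (3) dense range via $A^*$; (4) unboundedness of the inverse from $0\in\sigma(F)$ not being an eigenvalue; (5) conclude $\mathbb{C}\setminus\mathbb{R}\subset\sigma_c(A)$.
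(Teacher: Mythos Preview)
Your outline through step~(3) is on the right track and matches the paper's structure: rule out nonreal eigenvalues, then rule out residual spectrum via Corollary~\ref{pp21}. For step~(2), though, you detour through a similarity $U_+^{-1}A_+U_+$ and a ``restriction'' $F_+=F\upharpoonright\mathfrak{H}_+$; the latter is not well defined, since $JF=F^{-1}J$ shows that $F=J\mathcal{C}$ does \emph{not} leave $\mathfrak{H}_+$ invariant unless $T=0$. The direct argument (which your detour is groping toward) is simply that $A_\pm$ is symmetric in the pre-Hilbert space $(\mathfrak{L}_\pm,\pm[\cdot,\cdot])$ --- immediate from $J$-self-adjointness of $A$ and invariance of $\mathfrak{L}_\pm$ --- and symmetric operators have only real eigenvalues. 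That is exactly how the paper handles it.

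The genuine gap is step~(4). You try to manufacture a Weyl sequence for $A-\lambda I$ out of a sequence witnessing $0\in\sigma(F)$, but there is no mechanism linking the two: $F$ records only the geometry of the pair $\mathfrak{L}_+,\mathfrak{L}_-$, not the action of $A$, and nothing in your sketch produces an estimate $\|(A-\lambda I)y_n\|/\|y_n\|\to 0$. You flag this yourself as the main obstacle, and indeed it cannot be closed along those lines without additional hypotheses on $A$. The paper sidesteps the issue with a one-line observation you missed: by Proposition~\ref{tt66} one has $\mathcal{D}(A)\subset\mathcal{D}(\mathcal{C})=\mathfrak{H}'$ and $A$ maps each $\mathfrak{L}_\pm$ into itself, so
\[
\mathcal{R}(A-zI)\subseteq\mathfrak{L}_+[\dot{+}]\mathfrak{L}_-=\mathfrak{H}'\subsetneq\mathfrak{H}
\]
for every $z$. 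Hence $A-zI$ is never surjective, $z\notin\rho(A)$, and combined with steps~(2) and~(3) this forces $z\in\sigma_c(A)$. No approximate eigenvectors, no analysis of $\sqrt{F}$, no resolvent bounds are needed.
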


\begin{proof} Assume that a $J$-self-adjoint operator $A$ with generalized
$\mathcal{C}$-symmetry has a non-real eigenvalue
$z\in\mathbb{C}\setminus\mathbb{R}$. By Proposition \ref{tt66}, at
least one of the operators $A_{\pm}$ in \eqref{newww1} have the
eigenvalue $z$. However, this is impossible because $A_{\pm}$ are
symmetric in the pre-Hilbert spaces $\mathfrak{L}_{\pm}$ with scalar
products $[\cdot, \cdot]$ and $-[\cdot, \cdot]$, respectively.
Therefore,
$\sigma_p(A)\cap(\mathbb{C}\setminus\mathbb{R})=\emptyset$. Further
$\sigma_r(A)\cap(\mathbb{C}\setminus\mathbb{R})=\emptyset$ since
$A^*$ has a generalized $\mathcal{C}$-symmetry.

In view of \eqref{newww1} and \eqref{new71}, ${\mathcal
R}(A-zI)\subseteq\mathfrak{L}_+[\dot{+}]\mathfrak{L}_-=\mathfrak{H}'\not=\mathfrak{H}$
for any non-real $z$. Hence,
$\sigma_c(A)\supset\mathbb{C}\setminus\mathbb{R}$.
\end{proof}

Denote by $\mathfrak{H}_{\mathcal C}$ the completion of
$\mathcal{D}(\mathcal{C})$ with respect to the positive sesquilinear
form
$$
(f, g)_{\mathcal{C}}:=[\mathcal{C}f,g]=(J\mathcal{C}f,g), \qquad
\forall{f,g}\in{\mathcal{D}(\mathcal{C}}).
$$
In contrast to the case of $\mathcal{C}$-symmetry (see Section 3),
the norm $\|\cdot\|_{\mathcal{C}}^2=(\cdot,\cdot)_{\mathcal{C}}$ is
not equivalent to the initial one.

If a $J$-self-adjoint operator $A$ has a generalized
$\mathcal{C}$-symmetry, then
$$
(Af, g)_{\mathcal{C}}=[\mathcal{C}Af, g]=[A\mathcal{C}f,
g]=[\mathcal{C}f, Ag]=(f, Ag)_{\mathcal{C}}, \quad
\forall{f,g}\in\mathcal{D}(A).
$$
Hence, $A$ is a symmetric operator in $\mathfrak{H}_{\mathcal C}$.

\section{Schr\"{o}dinger operator with $\mathcal{PT}$-symmetric zero-range potentials}
\subsection{An example of $\mathcal{C}$-symmetry}
Let $\mathfrak{H}=L_2(\mathbb{R})$ and let $J=\mathcal{P}$, where
$\mathcal{P}f(x)=f(-x)$ is the space parity operator in
$L_2(\mathbb{R})$. In that case, the fundamental decomposition
\eqref{d1} of the Krein space $(L_2(\mathbb{R}), [\cdot, \cdot])$
takes the form
\begin{equation}\label{as5}
L_2(\mathbb{R})=L_2^{\mathrm{even}}{\oplus}L_2^{\mathrm{odd}},
\end{equation}
where $\mathfrak{H}_+=L_2^{\mathrm{even}}$ and
$\mathfrak{H}_-=L_2^{\mathrm{odd}}$ are subspaces of even and odd
functions in $L_2(\mathbb{R})$.

Consider the one-dimensional Schr\"{o}dinger operator with singular
zero-range potential
\begin{equation}\label{as3}
-\frac{d^2}{dx^2}+V_\gamma, \qquad
V_\gamma=i\gamma[<\delta',\cdot>\delta+
 <\delta,\cdot>\delta'], \quad \gamma\geq{0}
\end{equation}
where $\delta$ and $\delta'$ are, respectively, the Dirac
$\delta$-function and its derivative (with support at $0$).

It is easy to verify that
$\mathcal{PT}[-\frac{d^2}{dx^2}+V_\gamma]=[-\frac{d^2}{dx^2}+V_\gamma]\mathcal{PT}$,
where $\mathcal{T}$ is the complex conjugation operator
$\mathcal{T}f(x)=\overline{f(x)}$. Thus the expression \eqref{as3}
is $\mathcal{PT}$-symmetric \cite{B4, B22}.

The operator realization $A_\gamma$ of $-d^2/dx^2+V_\gamma$ in
$L_2(\mathbb{R})$ is defined as
\begin{equation}\label{new14}
A_\gamma=A_{\mathrm{reg}}\upharpoonright\mathcal{D}(A_\gamma), \quad
\mathcal{D}(A_\gamma)=\{f\in{{W_2^2}(\mathbb{R}\backslash\{0\})} \ :
\ A_{\mathrm{reg}}f\in{L_2(\mathbb{R})}\},
\end{equation}
where the regularization of $-d^2/dx^2+V_\gamma$ onto
${W_2^2}(\mathbb{R}\backslash\{0\})$ takes the form
$$
{A}_{\mathrm{reg}}=-\frac{d^2}{dx^2}+i\gamma[<\delta_{\mathrm{ex}}',\cdot>\delta+
<\delta_{\mathrm{ex}},\cdot>\delta'].
$$
Here $-{d^2}/{dx^2}$ acts on ${W_2^2}(\mathbb{R}\backslash\{0\})$ in
the distributional sense and
$$
 <\delta_{\mathrm{ex}}, f>=\frac{f(+0)+f(-0)}{2}, \quad <\delta_{\mathrm{ex}}', f>=-\frac{f'(+0)+f'(-0)}{2}
$$
for all $f(x)\in{{W_2^2}(\mathbb{R}\backslash\{0\})}$.

The operator $A_\gamma$ defined by \eqref{new14} is a
$\mathcal{P}$-self-adjoint operator in the Krein space
${L_2(\mathbb{R})}$.

According to \cite{AK}, $A_\gamma$ has ${\mathcal{C}}$-symmetry for
all $\gamma\not=2$. The corresponding operator
$\mathcal{C}\equiv\mathcal{C}_\gamma$ takes the form
\begin{equation}\label{as4}
\mathcal{C}_\gamma=\alpha_\gamma{\mathcal P}+i\beta_\gamma{\mathcal
R},
\end{equation}
where $\alpha_\gamma=\frac{\gamma^2+4}{|\gamma^2-4|}$ and
$\beta_\gamma=\frac{4\gamma}{|\gamma^2-4|}$ are `hyperbolic
coordinates' ($\alpha_\gamma^2-\beta_\gamma^2=1$) and ${\mathcal
R}f(x)=(\mathrm{sign} \ {x})f(x)$.

The operator $A_\gamma$ is reduced by the decomposition
\begin{equation}\label{as6}
 L_2(\mathbb{R})=\mathfrak{L}_+^\gamma[\dot{+}]\mathfrak{L}_-^\gamma, \qquad
 \mathfrak{L}_+^\gamma=(I+\mathcal{C}_\gamma)L_2(\mathbb{R}), \quad
 \mathfrak{L}_-^\gamma=(I-\mathcal{C}_\gamma)L_2(\mathbb{R}).
\end{equation}
Here $\mathcal{C}_\gamma={\mathcal P}(I-T_\gamma)(I+T_\gamma)^{-1}$,
where the operator $T_\gamma$
\begin{equation}\label{as7}
T_\gamma=i\frac{2}{\gamma}{\mathcal{RP}} \quad (\gamma>2);  \qquad
T_\gamma=i\frac{\gamma}{2}{\mathcal{RP}} \quad (\gamma<2)
\end{equation}
is the operator of transition from the fundamental decomposition
\eqref{as5} to \eqref{as6}.

Let us assume $\gamma>2$ (the case $\gamma<2$ is completely
similar). By \eqref{d4}, \eqref{as5}, and \eqref{as7}
$$
\mathfrak{L}_+^\gamma=\{f_\mathrm{even}+i\frac{2}{\gamma}{\mathcal{R}}f_\mathrm{even}
\ : \ f_\mathrm{even}{\in}L_2^{\mathrm{even}} \}, \quad
\mathfrak{L}_-^\gamma=\{f_\mathrm{odd}-i\frac{2}{\gamma}{\mathcal{R}}f_\mathrm{odd}
\ : \ f_\mathrm{odd}{\in}L_2^{\mathrm{odd}} \}.
$$

If $\gamma\to{2}$, then the invariant subspaces
$\mathfrak{L}_+^\gamma$ and $\mathfrak{L}_-^\gamma$ for $A_\gamma$
`tend' to each other and for $\gamma=2$ they coincide with the
hyper-maximal neutral subspace
$$
\mathfrak{L}^2=\{f_\mathrm{even}+i{\mathcal{R}}f_\mathrm{even} \ : \
f_\mathrm{even}{\in}L_2^{\mathrm{even}}\}=\{f_\mathrm{odd}-i{\mathcal{R}}f_\mathrm{odd}
\ : \ f_\mathrm{odd}{\in}L_2^{\mathrm{odd}}\}.
$$

The spectrum of $A_{2}$ coincides with $\mathbb{C}$ and any point
$z\in\mathbb{C}\setminus\mathbb{R}_+$ is an eigenvalue of $A_2$
\cite{AK}. Therefore, the $\mathcal{P}$-self-adjoint operator $A_2$
has neither $\mathcal{C}$-symmetry nor generalized
$\mathcal{C}$-symmetry (see Corollary \ref{p21}).

The obtained result is in accordance with the `physical' concept of
$\mathcal{PT}$-symmetry. Indeed, it is easy to verify that the
$\mathcal{PT}$-symmetry of \eqref{as3} is unbroken for
$\gamma\not=2$ and broken for $\gamma=2$ in the sense of \cite{B} --
\cite{BJ}. According to the general concepts of the theory
\cite{B22, BBJ}, the existence of a hidden $\mathcal{C}$-symmetry is
an intrinsic property of unbroken $\mathcal{PT}$-symmetry.

\subsection{An example of generalized $\mathcal{C}$-symmetry}
Let us consider a Hilbert space
$\mathfrak{H}=\bigoplus_{1}^{\infty}L_2(\mathbb{R})$ with elements
$\mathfrak{f}=\{f_1, f_2, \ldots\}$, where $f_i{\in}L_2(\mathbb{R})$
and the scalar product $(\cdot, \cdot)_{\mathfrak{H}}$ is defined by
the formula
 $$
 (\mathfrak{f},
 \mathfrak{g})_{\mathfrak{H}}=\sum_{i=1}^\infty(f_i,g_i)_{L_2(\mathbb{R})}.
 $$

 The operator $J\mathfrak{f}=\{{\mathcal{P}}f_1,
{\mathcal{P}}f_2, \ldots\}$ is a fundamental symmetry in ${\mathfrak
H}$ (i.e., $J=J^*$ and $J^2=I$) and $(\mathfrak{H}, [\cdot,\cdot])$
endowed by the indefinite metric $[\mathfrak{f},
 \mathfrak{g}]=(J\mathfrak{f},
 \mathfrak{g})_{\mathfrak{H}}$ is a Krein space.

The operator
$$
A_{\vec{\gamma}}\mathfrak{f}=\{A_{\gamma_1}f_1, A_{\gamma_2}f_2,
\ldots\}, \qquad \vec{\gamma}=\{\gamma_i\}, \quad \gamma_i\geq{0},
$$
where $A_{\gamma_i}$ are defined by \eqref{new14} is
$J$-self-adjoint in $\mathfrak{H}$. If $2$ is not a limit point for
the set $\vec{\gamma}$ (i.e., $2$ does not belong to the closure of
$\vec{\gamma}$), the operator $A_{\vec{\gamma}}$ has
$\mathcal{C}$-symmetry with
$\mathcal{C}=\bigoplus_{1}^{\infty}\mathcal{C}_{\gamma_i}$ where
$\mathcal{C}_{\gamma_i}$ are given by \eqref{as4}.

Let us assume that $\gamma_i\not=2$ ($i\in\mathbb{N}$) and there
exists a subsequence $\gamma_{j}$ of $\vec{\gamma}$ such that
$\gamma_j\to{2}$. In that case the operator
$T=\bigoplus_{1}^{\infty}T_{\gamma_i}$ with $T_{\gamma_i}$
determined by \eqref{as7} satisfies the conditions \eqref{e3} and
$\|T\|=1$. Therefore, $T$ is the operator of transition from the
fundamental decomposition of $(\mathfrak{H}, [\cdot,\cdot])$ to the
$J$-orthogonal sum \eqref{d2}. Since $\|T\|=1$ the subspaces
$\mathfrak{L}_{\pm}$ in \eqref{d2} are determined by the unbounded
operator
$$
\mathcal{C}=\bigoplus_{1}^{\infty}\mathcal{C}_{\gamma_i}=J(I-T)(I+T)^{-1}.
$$
Thus the operator $A_{\vec{\gamma}}$ has a generalized
$\mathcal{C}$-symmetry.

\section{Acknowledgments}
 The author thanks Dr. Uwe G\"{u}nther for valuable discussions. The support by
 DFG 436 UKR 113/88/0-1 research project is gratefully acknowledged.

 \end{document}